\newtheorem{definition}{Definition}
\newtheorem{theorem}{Theorem}
\newtheorem{lemma}{Lemma}
\newtheorem{specification}{Specification}
\newenvironment{proof}{{\bf Proof. } }{{\hfill $\Box$}}
\newcommand{\ie}{{\em i.e.,}\xspace}
\newcommand{\eg}{{\em e.g.,}\xspace}
\begin{document}

\title{
Enabling Minimal Dominating Set in\\Highly Dynamic Distributed Systems
}

\renewcommand*{\thefootnote}{\fnsymbol{footnote}}

\author{
Swan Dubois\footnotemark[1]
\and
Mohamed-Hamza Kaaouachi\footnotemark[1]
\and
Franck Petit\footnotemark[1]
}

\footnotetext[1]{Sorbonne Universit\'es, UPMC Universit\'e Paris 6, F-75005, Paris, France\\
CNRS, UMR 7606, LIP6, F-75005, Paris, France\\
Inria, \'Equipe-projet REGAL, F-75005, Paris, France\\
E-mail: {\tt firstname.lastname@lip6.fr}}

\renewcommand*{\thefootnote}{\arabic{footnote}}
\setcounter{footnote}{0}

\date{}

\maketitle

\begin{abstract}
We address the problem of computing a Minimal Dominating Set in highly dynamic distributed systems. We assume weak connectivity, \ie the network may be disconnected at each time instant and topological changes are unpredictable. We make only weak assumptions on the communication: every process is infinitely often able to communicate with other processes (not necessarily directly). 

Our contribution is threefold. First, we propose a new definition of minimal dominating set suitable for the context of time-varying graphs that seems more relevant than existing ones. Next, we provide a necessary and sufficient topological condition for the existence of a deterministic algorithm for minimal dominating set construction in our settings. Finally, we propose a new measure of time complexity in time-varying graph in order to  to allow fair comparison between algorithms. Indeed, this measure takes account of communication delays attributable to dynamicity of the graph and not to the algorithms. 
\end{abstract}

\section{Introduction} 

The availability of wireless communications has drastically increased in recent years and established new applications. Humans, agents, devices, robots, and applications interact together through more and more heterogeneous infrastructures, such as mobile \emph{ad hoc} networks (MANET), vehicular networks (VANET), (mobile) sensor and actuator networks (SAN), body area networks (BAN), as well as always evolving network infrastructures on the Internet. In such networks, items (users, links, equipments, {\em etc.}) may join, leave, or move inside the network at unforeseeable times. A common feature of these networks is their {\em high dynamic}, meaning that their topology keeps continuously changing over time. Dynamic, heterogeneity of devices, usages, and participants, and often the unprecedented scale to consider, make the design of such infrastructures extremely challenging. For a vast majority of them, the dynamics are also unpredictable. Classically, distributed systems are modeled by a static undirected connected graph where vertices are processes (nodes, servers, processors, etc.) and edges represent bidirectional communication links.  Clearly, such modeling is not suitable for high dynamic networks.

Numerous models taking in account topological changes over time have have been proposed since several decades, \eg \cite{AKMUV12,AE84,CCF09,F03,F04,FGM07,SW09}. Some works aim at unifying most of the above approaches. For instance, in~\cite{XFJ03}, the authors introduced the {\em evolving graphs}. They proposed modeling the time as a sequence of discrete time instants and the system dynamic by a sequence of static graphs, one for each time instant. More recently, another graph formalism, called {\em Time-Varying Graphs} (TVG), has been provided in~\cite{CFQS12}. In contrast with evolving graphs, TVGs allow systems evolving within continuous time. Also in~\cite{CFQS12} and in companion papers~\cite{CFMS10,CFMS12}, TVGs are gathered and ordered into classes depending mainly on two main features: the quality of connectivity among the participating nodes and the possibility/impossibility to perform tasks.

In this paper, we focus on the {\em Minimal Dominating Set} (MDS) problem. A dominating set is a subset of vertices of a graph such as each vertex of this graph is either in the dominating set or neighbor of a vertex in the dominating set. A \emph{minimal} dominating set is such that none of its proper subsets is also a dominating set of the graph. Like many distributed covering structure (such as trees, coloring, matching, \emph{etc.}), Minimal Dominating Set is a key building block for numerous network protocols, \eg hierarchical routing and clustering, unicast, multicast, topology control, media access coordination, to name only a few.

Minimal Dominating Set and some of related problems (such as Maximal Independent Set and Connected Dominating Set) receive some attention in the context of dynamic networks, \eg \cite{BDTC05,WDCG12,SW10,CMM11}. The difficulty to define covering structures in dynamic networks (including MDS) is pointed out in~\cite{CF13r}. Indeed, the authors show that the definition of such structures may become ambiguous, incorrect, or even irrelevant when applied in dynamic systems. As an example, if the dynamicity of the graph is modeled as a sequence of static graphs and a new MDS is computed at each topological change as in \cite{WDCG12}, the stability of the MDS fully depends on the dynamic rate of the network (\ie the relative speed of appearance/disappearance of edges). This natural definition may hence lead to an high instability (or even impossibility of use) of the MDS. We discuss more precisely this issue in Section~\ref{sec:MDS}.

This paper aims at proposing a new approach suitable for Minimal Dominating Set construction in time-varying graph with weak connectivity, \ie the graph may be disconnected at each time instant and topological changes are unpredictable. The only assumption on communications is that every process is infinitely often able to communicate with other processes (not necessarily directly). In this context, our contribution is threefold. First, we propose a new definition of MDS for time-varying graphs that increases stability of this structure. More precisely, we require that each dominated node is infinitely often neighbor of at least one dominating node. Next, we provide a necessary and sufficient topological condition for the existence of a deterministic algorithm for MDS construction in our settings. Finally, we propose a new measure of time complexity in time-varying graph. This measure takes account of communication delays attributable to the dynamicity of the graph and not to the algorithm in order to allow fair comparison between algorithms.

The paper is organized as follows. Section \ref{sec:TVG} presents formally the time-varying graph model and our new measure of time complexity. We devote the Section \ref{sec:UG} to some preliminaries necessary to our main results on MDS presented in Section \ref{sec:MDS}. Finally, Section \ref{sec:conclu} concludes the paper.

\section{Time-Varying Graph: Model and Complexity}\label{sec:TVG}

This section aims to present formally the framework of our study of dynamic systems. In a first time, we recall in Section \ref{sub:model} the model of time-varying graphs (TVGs) introduced by \cite{CFQS12}. We present only definitions needed for the comprehension of our work and we refer the reader to \cite{CFQS12} for more details and an interesting taxonomy of TVGs. 

Then, Section \ref{sub:complexity} focuses in complexity measures in this model. We think that a computational model without correct time complexity measure(s) is not complete. We are unable to find in previous works any such measure that is suitable for all TVGs. In consequence, we propose in this paper a new time complexity measure that captures the cost of the algorithm independently of delays introduced by topology changes and asynchronous communications.

\subsection{Model}\label{sub:model}

Let us first borrow the formalism introduced in~\cite{CFQS12} in order to describe the distributed systems prone to high dynamic. We consider {\em distributed systems} made of $n$ computing entities, henceforth indifferently referred to as {\em nodes}, {\em vertices}, or {\em processes}.  A process has a local memory, a local sequential and deterministic algorithm, and input\slash output capabilities. We assume that each entity has a unique identifier. Moreover, given two distinct entities $p$ and $q$ identified respectively by $id_p$ and $id_q$, either $id_p<id_q$ or $id_q<id_p$. All these entities are gathered in a set $V$.  Let $E$ be a set of edges (or relations) between pairwise entities, that describes interactions between processes, namely communication exchange. The presence of an edge between two vertices $p$ and $q$ at a given time $t$ means that each vertex among $\{p,q\}$ is able to send a message to the other at $t$.

The interactions between processes are assumed to take place over a time span $\mathcal{T} \subseteq \mathbb{T}$ called the {\em lifetime} of the system. The temporal domain $\mathbb{T}$ is generally assumed to be either $\mathbb{N}$ (discrete-time systems) or $\mathbb{R}^+$ (continuous-time systems).

\begin{definition}[Time-varying graph \cite{CFQS12}]
\label{def:TVG}
A time-varying graph (TVG for short) $g$ is a tuple $(V,E,\mathcal{T},\rho,$ $\zeta,\phi)$ where $V$ is a (static) set of vertices $\{v_1,\ldots,v_n\}$, $E$ a (static) set of edges between these vertices $E\subseteq V\times V$, $\rho:E\times\mathcal{T}\to\{0,1\}$ (called presence function) that indicates whether a given edge is available (\emph{i.e.} present) at a given time, $\zeta:E\times\mathcal{T}\rightarrow \mathbb{T}$ (called edge latency function) indicates the time it takes to cross a given edge if starting at a given date, and $\phi:V\times\mathcal{T}\rightarrow \mathbb{T}$ (called process latency function) indicates the time an internal action of a process takes at a given date.
\end{definition}

Given a TVG $g$, let $\mathcal{T}_g$ be the subset of $\mathcal{T}$ for which a topological event (appearance/disappearance of an edge) occurs in $g$. The evolution of $g$ during its lifetime $\mathcal{T}$ can be described as the sequence of graphs $\mathcal{S}_{g} = g_1, g_2,\ldots$, where $g_i=(V,E_i)$ corresponds to the static {\em snapshot} of $g$ at time $t_i \in \mathcal{T}_g$, \emph{i.e.} $e\in E_i$ if and only if $\forall t\in[t_i,t_{i+1}[,\rho(e,t) = 1$. Note that, by definition, $g_i \neq g_{i+1}$ for any $i$. 

We consider {\em asynchronous} distributed systems, \emph{i.e.} no pair of processes has access to any kind of shared device that could allow to synchronize their execution rate.  Furthermore, at any time, no process has access to the output of $\zeta$, \emph{i.e.} none of them can ({\em a priori}) predict a bound on the message delay. Note that the ability to send a message to another process at a given time does not mean that this message will be delivered. Indeed, the dynamicity of the communication graph implies that the edge between the two processes may disappear before the delivery of this message leading to the lost of messages in transit. 

The presences and absences of an edge are instantly detected by its two adjacent nodes. We assume that our system provides to each process a non-blocking communication primitive named \textbf{Send\_retry} that ensures the following property. When a process $p$ invokes \textbf{Send\_retry}$(m,q)$ (where $m$ is an arbitrary message and $q$ another process of $V$) at time $t$, this primitive delivers $m$ to $q$ in a finite time provided that there exists a time $t'\geq t$ such that the edge $\{p,q\}$ is present at time $t'$ during at least $\zeta(\{p,q\},t')$ units of time. In other words, the delivery of the message is ensured if there is, after the invocation of the primitive, an availability of the edge that is sufficient to overcome the communication delay of the edge at this time. Note that this primitive may never deliver a message (\emph{e.g.} if the considered edge never appears after invocation). Details of the implementation of this primitive are not considered here but it typically consists in resending $m$ at each apparition of the edge $\{p,q\}$ until its reception by $q$. This primitive allows us to abstract from topology changes and asynchronous communication and to write high-level algorithms.

\paragraph{Configurations and executions} The \emph{state} of a process is defined by the values of its variables. Given a TGV $g$, a \emph{configuration} of $g$ is a vector of $n+2$ components $(g_i, M_i, p_1, p_2, \ldots, p_n)$ such that $g_i$ is a static snapshot of $g$ (\emph{i.e.} $g_i \in \mathcal{S}_{g}$), $M_i$ is the set of multi-sets of messages carried over $E_i$, and $p_1$ to $p_n$ represent the state of the $n$ processes in $V$. We say that a process $p$ outputs a value $v$ in a configuration $\gamma$ if one of its variable (called an output variable) has the value $v$ in $\gamma$.

An {\em execution} of the distributed system modeled by $g$ is a sequence of configurations $e=\gamma_{0}, \ldots ,\gamma_{k},$ $\gamma_{k+1}, \ldots$, such that for each $k\geq 0$, during an execution step $(\gamma_k,\gamma_{k+1})$, one of the following event occurs: $(i)$ $g_{k} \neq g_{k+1}$, or $(ii)$ at least one process receives a message, sends a message, or executes some internal actions changing its state. The \emph{algorithm} executed by $g$ describes the set of all allowed internal actions of processes (in function of their current state or external events as message receptions or time-out expirations) during an execution of $g$. We assume that during any configuration step $(\gamma_{k},\gamma_{k+1})$ of an execution, if  $g_{k} \neq g_{k+1}$, then for each edge $e$ such that $e \in E_k$  and $e \notin E_{k+1}$ (\emph{i.e.} $e$ disappears during the step $(\gamma_k,\gamma_{k+1}$), none of the messages carried by $e$ belongs to $M_{k+1}$. Also, for each edge $e$ such that $e \in E_{k+1}$  and $e \notin E_{k}$ (\emph{i.e.} $e$ appears during the step $(\gamma_k,\gamma_{k+1})$), $e$ contains no message in configuration $\gamma_{k+1}$. 

\paragraph{Connected over time TVGs} A key concept of time-varying graphs has been identified in~\cite{CFQS12}. The authors shows that the classical notion of path in static graphs in meaningless in TVGs. Indeed, some processes may communicate even if there is no (static) path between them at each time. To perform communication between two processes, the existence of a \emph{temporal path} (\emph{a.k.a.} \emph{journey}) between them is sufficient.  They define such a temporal path as follows: a sequence of ordered pairs $\mathcal{J}=\{(e_1,t_1),(e_2,t_2),...,(e_k,t_k)\}$ such that  $\{e_1,e_2,...,e_k\}$ is a path\footnote{A sequence of edges $\{v_1,v'_1\}, \{v_2,v'_2\}, \ldots, \{v_k,v'_k\}$ is a \emph{path} if  $\forall i \in\{1,k-1\}, v_{i+1} = v'_{i}$.} if for every $i \in [1,k]$, $\rho(e_i,t_i)=1$ and $t_{i+1} \geq t_i + \zeta(e_i,t_i)$. In other words, a journey from process $p$ to process $q$ is a sequence of adjacent edges from $p$ to $q$ such that availability and latency of edges allow the sending of a message from $p$ to $q$ using the \textbf{Send\_retry} primitive at each intermediate process (refer to \cite{CFQS12} for a formal definition). Note that a journey is a non symmetric relation between two processes.

Based on various assumptions made about journeys (\emph{e.g.} recurrence, periodicity, symmetry, and so on), the authors propose in~\cite{CFQS12} proposes a relevant hierarchy of TVG classes. In this paper, we choose to make minimal assumptions on the dynamicity of our system since we restrict ourselves on \emph{connected-over-time} TVGs defined as follows: 

\begin{definition}[Connected-over-time TVG \cite{CFQS12}]\label{def:COT}
A TVG $(V,E,\mathcal{T},\rho,\zeta,\phi)$ is connected-over-time if, for any time $t\in\mathcal{T}$ and for any pair of processes $p$ and $q$ of $V$, there exists a journey from $p$ to $q$ after time $t$. The class of connected-over-time TVGs is denoted by $\mathcal{COT}$\footnote{Authors of \cite{CFQS12} refer to this class as C5 in their hierarchy of TVG classes.}.
\end{definition}

Note that the lifetime of a connected-over-time TVG is necessarily infinite by definition. The class $\mathcal{COT}$
allows us to capture highly dynamic systems since we only require that any process will be always able to communicate
with any other one without any extra assumption on this communication (such as delay, periodicity, or used route). In particular, note that a connected-over-time TVG may be disconnected at each time and that the presence of an edge at a given time does not preclude that this edge will appear again after this time.  Define an \emph{eventual missing edge} as en edge that appears only a finite number of time during the lifetime of the TVG. The main difficulty encountered in the design of distributed algorithms in $\mathcal{COT}$ is to deal with such eventual missing edges because no process is able to predict if a given adjacent edge is an eventual missing edge or not. Note that the time of the last presence of such an eventual missing edge cannot be even bounded.

\begin{definition}[(Eventual) Underlying Graph]\label{def:UG}
Given a TVG $g=(V,E,\mathcal{T},\rho,\zeta,\phi)$, the underlying graph of a $g$ is the (static) graph $U_g=(V,E)$. The eventual underlying graph of $g$ is the (static) subgraph $U^\omega_g=(V,E^\omega_g)$ with $E^\omega_g=E \setminus M_g$, where $M_g$ is the set of eventual missing edges of $g$. 
\end{definition}

Intuitively, the underlying graph (sometimes referred to as {\em footprint}) of a TVG $g$ gathers all edges that appear at least once during the lifetime of $g$, whereas the eventual underlying graph of $g$ gathers all edges that are infinitely often present during the lifetime of $g$. Note that, for any TVG of $\mathcal{COT}$, both underlying graph and eventual underlying graph are connected by definition. Let us define the \emph{neighborhood} $\mathcal{N}_p$ of a process $p$ is the set of processes with which $p$ shares an edge in the underlying graph.

\paragraph{Induced subclasses}

In the following, we focus on specific subclasses of the class $\mathcal{COT}$ to establish our impossibility result. Informally, we focus on subclasses that gather all TVGs whose underlying graph belongs to a given set. The intuition behind this restriction is the following. In practice, some technical reasons may restrict or prevent the communication between some processes, that induces a given underlying graph for the TVG that models our system. In contrast, we cannot predict in general the availabilities and latencies of communication edges, that leads us to consider all TVGs sharing this underlying graph. More formally:

\begin{definition}[Induced subclass]\label{def:inducedsubclass}
Given a set of (static) graphs $\mathcal{F}$ and a class of TVGs $\mathcal{C}$, the subclass of $\mathcal{C}$ induced by $\mathcal{F}$ (denoted by $\mathcal{C}|_\mathcal{F}$) is the set of all TVGs of $\mathcal{C}$ whose underlying graph belongs to $\mathcal{F}$.
\end{definition}

\paragraph{Diameter}

For any given (static) graph $g$, we denote by $diam(g)$ the diameter of $g$ (that is, the longest distance between two processes of $g$).

\subsection{Complexity Measures}\label{sub:complexity}

At the best of our knowledge, there exists currently no time complexity measure that is suitable for any class of TVGs. Some previous works interested in complexity measure in the TVG model but restrict themselves to synchronous systems (see \emph{e.g.} \cite{KOM11c,KLO10c}), to message complexity (see \emph{e.g.} \cite{CFMS10c}), or to specific class of TVGs in which an existing notion of complexity naturally makes sense (see \emph{e.g.} \cite{IKW14c,CFMS10c}).

The first contribution of this paper is to propose a definition of a time complexity measure suitable for our model. We need a definition that captures the ``quality'' of an algorithm independently of delays introduced by asynchronous communications but also by topological changes. A typical example of such a delay is the waiting after the next apparition of an incident edge to a disconnected process that may introduce a long delay that is not imputable to the algorithm but only to the dynamicity of the system. To perform our goal, we propose to extend the classical notion of time complexity commonly adopted in asynchronous message passing (static) systems.

The classical way to deal with communication delays in time complexity measure in asynchronous message passing models is to consider as the unit of time of an execution the worst delay between the sending and the reception of a message during this execution (see \cite{AW04} for example). Using this time measure, we can bound the termination time of any execution of an algorithm independently of communication delays in this execution. This leads to a time complexity measure (the worst termination time over all possible executions of the algorithm) that induces a fair comparison between algorithms. Our proposal is to extend this idea to dynamic environments by including delays introduced by the dynamicity in this definition. In other words, we will consider as the unit of time of an execution the worst delay between the invocation of the \textbf{Send\_retry} primitive and the delivery of the message by this primitive during this execution.

This natural extension of the definition of time complexity measure of asynchronous message passing systems is not sufficient. Indeed, the dynamicity of the system may introduce another possibly arbitrarily long delay that we call initial delay. As an example, consider a problem that requires each process to propagate an initial value (think about consensus-like problems). An easy way to delay the termination of any algorithm for this problem is to disconnect one process for an arbitrary long (but bounded) time that leads all other processes to wait after its first apparition. Intuitively, this delay is not due to the algorithm but to the dynamicity of the system. Consequently, our complexity measure have to ignore such initial delay.

To deal with this issue, we propose to define for each problem a starting time as follows. It is the smallest time of an execution where the dynamicity of the system ``shows'' to processes the minimal topological information to solve the problem. Note that this starting time depends only of the problem (\emph{e.g.} first connexion of the last process for consensus-like problems) and that, in a static system, the starting time and the initial time are identical (since the system cannot delay apparition of any topological information).

Then, we propose to measure the complexity of an algorithm by the worst time (expressed in the time unit described above) between the starting time and the termination of the algorithm over all its possible executions. We believe that this time complexity measure allows us to fairly compare algorithms designed in our model based on TVGs since it exhibits their intrinsic communication costs and does not take in account delays introducing by asynchronous communications and topological changes.

We now state our complexity measure more formally. In the following, we first restrict to \emph{fixed point computation problems} on a TVG class $\mathcal{C}$, \emph{i.e.} problems that admit a specification of the following form: it is required that the execution on every TVG of $\mathcal{C}$ reaches in a finite time a suffix where each process outputs constantly a given value. The required value depends of the considered problem and is not necessarily the same at each process. Using this definition, leader election or spanning structure construction are fixed point computation problems whereas mutual exclusion or broadcast are not.  

We consider now a (deterministic) distributed algorithm $\mathcal{A}$ that satisfies the specification of a fixed point computation problem $\mathcal{P}$ on a TVG class $\mathcal{C}$. Let $e$ be the execution of $\mathcal{A}$ on a given TVG of class $\mathcal{C}$. For any message $m$ sent during $e$, we call \emph{delay} (of $m$) the time between the invocation of the \textbf{Send\_retry} primitive by the sender of $m$ and the delivery of $m$ to its destination. Now, we call \emph{communication step} (or simply step) of $e$ the worst delay over the set of messages that are actually delivered during $e$ (note that we do not consider messages that are never delivered in $e$).

We associate to $\mathcal{P}$ a function $NPS_\mathcal{P}$, called the \emph{necessary presence sets function} of $\mathcal{P}$, that returns, for any TVG $(V,E,\mathcal{T},\rho,\zeta,\phi)$ of $\mathcal{C}$, a set of subsets of $E$. Note that the actual definition of this function depends of the problem itself and not of a TVG nor an execution. Each element of $NPS_\mathcal{P}(g)$ describes one of the set of edges whose apparition is necessary and sufficient to start the effective solving the problem (independently of the used algorithm). We give some examples in the following. For the underlying graph computation problem $\mathcal{UG}$, we have $NPS_\mathcal{UG}(g)=\{E\}$ since each edge of $E$ must appear in the output of any process. For a broadcast problem $\mathcal{B}$, we have $NPS_\mathcal{B}(g)=\{\{(p,q)\}|q\in \mathcal{N}_p\}$ (where process $p$ is the sender of the message) since the apparition of any edge adjacent to $p$ is necessary and sufficient to begin the broadcast of a message by $p$.

We define the \emph{starting time} of the execution $e$ of $\mathcal{A}$ over a TVG $g$ as the smallest time $t\in\mathcal{T}$ such that each edges of at least one element of $NPS_{\mathcal{P}}(g)$ are present at least once before $t$ in this execution. Note that, in a static distributed system, the initial time and the starting time are always identical since all edges of all elements of $NPS_\mathcal{P}(g)$ are present in the initial configuration whatever the definition of $NPS_\mathcal{P}$ is. Finally, the \emph{convergence time} of $\mathcal{A}$ on $g$ is the time (expressed in communication steps of $e$) between the starting time of $e$ and the smallest time in $e$ where the specification of $\mathcal{P}$ is satisfied.

\begin{definition}[Time complexity on a TVG class]
The time complexity of a distributed algorithm $\mathcal{A}$ that satisfies the specification of a fixed point computation problem $\mathcal{P}$ on a TVG class $\mathcal{C}$ is the worst convergence time of $\mathcal{A}$ on all TVGs of $\mathcal{C}$.
\end{definition}

Note that this definition may be naturally extended to so-called \emph{service problems} in the following way. First, we consider as starting time the maximum between the starting time defined above and the time of request of a service (\emph{e.g.} the sending of a message for a broadcast algorithm, the request of critical section for a mutual exclusion algorithm). Second, we substitute the convergence time of the algorithm by the time of achievement of the required service by the algorithm (\emph{e.g.} the delivery of a message to its destinations for a broadcast algorithm, the starting of critical section for a mutual exclusion algorithm).

\section{Underlying Graph Computation}\label{sec:UG}

In this section, we present an underlying graph computation algorithm (see Section \ref{sub:UGalgo}) and proves its time optimality with respect to our new measure (see Section \ref{sub:UGcomplexity}). This algorithm is used as a building block in the next section for our minimal dominating set construction algorithm. Before presenting our algorithm, we need to specify the underlying graph computation problem.

\begin{specification}[Underlying graph]
An algorithm $\mathcal{A}$ satisfies the underlying graph specification for a class of TVGs $\mathcal{C}$ if the execution $e=\gamma_0,\gamma_1,\ldots$ of $\mathcal{A}$ on every TVG $g$ of $\mathcal{C}$ has a suffix $e_i=\gamma_i,\gamma_{i+1},\ldots$ for a given $i\in\mathbb{N}$ such that each process outputs the underlying graph of $g$ in any configuration of $e_i$.
\end{specification}

\subsection{Algorithm}\label{sub:UGalgo}

Our underlying graph computation algorithm is presented in Algorithm \ref{algo:ug}. The intuition behind this algorithm is simple. Each process stores locally a graph, initially empty, that eventually gathers all edges of the underlying graph. At the first appearance of an edge, the two adjacent processes add this edge to their graph. Then, they try to propagate the last version of their graph to all processes that they have as neighbor at least once since the beginning of the execution. When a process receives such a message (that contains the current underlying graph of another process), it add to its own underlying graph every edge it does not already know. If its underlying graph grows during this operation, then the process propagates again its underlying graph to all processes that it has as neighbor at least once since the beginning of the execution.

\begin{algorithm}\caption{Underlying graph computation for process $p$.}\label{algo:ug}
\begin{description}
\item[Variables:]~\\
$g_p=(V_p,E_p)$: underlying graph built by $p$\\
$\mathcal{N}_p$: neighborhood of $p$
\item[Initialization:]~\\
$g_p:=(\{x\},\emptyset)$\\
$\mathcal{N}_p:=\emptyset$
\item[Upon appearance of an edge $\{p,q\}$:]~
\begin{tabbing}
xxx \= xxx \= \kill 
\textbf{if} $\{p,q\}\notin E_p$ \textbf{then}\\
\> $\mathcal{N}_p:=\mathcal{N}_p\cup\{q\}$\\
\> $g_p:=(V_p\cup\{q\},E_p\cup\{\{p,q\}\})$\\
\> \textbf{foreach} $r\in\mathcal{N}_p$ \textbf{do}\\
\>\> \textbf{Send\_retry}$(add(g_p),r)$
\end{tabbing}
\item[On reception of $add(g_q)$ from $q$:]~
\begin{tabbing}
xxx \= xxx \= \kill 
\textbf{if} $E_q\setminus E_p\neq\emptyset$ \textbf{then}\\
\> $g_p:=(V_p\cup V_q,E_p\cup E_q)$\\
\> \textbf{foreach} $r\in\mathcal{N}_p\setminus \{q\}$ \textbf{do}\\
\>\> \textbf{Send\_retry}$(add(g_p),r)$
\end{tabbing}
\end{description}
\end{algorithm}

This algorithm ensures that, upon the first apparition of the last edge of the underlying graph, this edge is added to the output of adjacent processes and then propagated (at least) to their neighbors in the eventual underlying graph in one step, and so on (note that we have no guarantees for neighbors in the underlying graph in general since it may exist some eventual missing edges). Hence, in any execution, after at most $diam(U^\omega_g)$ steps, this edge (and all others) appears in the output graph of any process. In other words, we have the following result:

\begin{theorem}
Algorithm \ref{algo:ug} satisfies the underlying graph specification for $\mathcal{COT}$. Moreover, its convergence time on any TVG $g$ of $\mathcal{COT}$ is $diam(U^\omega_g)$ steps.
\end{theorem}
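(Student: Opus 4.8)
The plan is to establish the two parts of the theorem separately: correctness (the algorithm satisfies the underlying graph specification for $\mathcal{COT}$) and then the convergence time bound of $diam(U^\omega_g)$ steps. For correctness, I would first argue \emph{safety}: by inspection of the two actions, a process only ever adds to $E_p$ an edge $\{p,q\}$ that it has seen appear (in the ``appearance'' action) or an edge belonging to some $E_q$ received in an $add$ message. By an easy induction on the length of the execution, every edge ever placed in any $E_p$ belongs to $E$, so no process ever outputs an edge outside the underlying graph. Next I would argue \emph{liveness}: every edge $e=\{u,v\}\in E$ eventually appears at some time $t_e$ (by definition of the underlying graph), at which point both $u$ and $v$ insert $e$ and invoke \textbf{Send\_retry} toward all their current neighbors; then I would show by induction on distance in $U^\omega_g$ that $e$ propagates to every process. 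The key sub-step is that \textbf{Send\_retry}$(add(g_p),r)$ to a neighbor $r$ in the \emph{eventual} underlying graph is guaranteed to be delivered eventually, because the edge $\{p,r\}$ appears infinitely often and hence eventually appears long enough to overcome its latency (this is precisely the guarantee of \textbf{Send\_retry} stated in the model). Since $U^\omega_g$ is connected (noted in the excerpt for any TVG of $\mathcal{COT}$), every process is reachable from $u$ along eventual edges, so $e$ is eventually in every $E_p$; doing this for all finitely many edges of $E$ gives a suffix in which every process outputs exactly $(V,E)$.

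For the complexity bound, I would work with the starting time as defined in Section~\ref{sub:complexity}: since $NPS_\mathcal{UG}(g)=\{E\}$, the starting time is the first moment at which every edge of $E$ has appeared at least once, \emph{i.e.} $t^\star=\max_{e\in E} t_e$. Let $e^\star=\{u,v\}$ be an edge achieving this maximum. The claim is that within $diam(U^\omega_g)$ communication steps after $t^\star$, every process has $e^\star$ (and, a fortiori, all of $E$, since every other edge appeared no later than $e^\star$ and has had at least as many steps to propagate). The argument is an induction: at $t^\star$, $u$ and $v$ own $e^\star$; after one communication step, every eventual-underlying-graph neighbor of $u$ or $v$ owns it — here I use that a \textbf{Send\_retry} along an eventual edge completes within one communication step (by the definition of ``communication step'' as the worst such delay in the execution, the chain of these deliveries is measured in steps), and that the propagation rule re-sends whenever $E_p$ grows, which it does upon first receipt of $e^\star$. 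After $k$ steps, every process within distance $k$ of $\{u,v\}$ in $U^\omega_g$ owns $e^\star$; after $diam(U^\omega_g)$ steps all processes do. For the matching lower bound (to justify that the convergence time \emph{is} $diam(U^\omega_g)$ and not merely at most that), I would exhibit, for a worst case, a TVG in $\mathcal{COT}$ whose eventual underlying graph is a path (or more generally has the stated diameter) and whose dynamics deliver $e^\star$ only via the eventual edges, forcing $diam(U^\omega_g)$ steps of relay; the full optimality discussion is presumably deferred to Section~\ref{sub:UGcomplexity}, so here I would only claim the upper bound matches.

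The main obstacle I anticipate is the liveness/propagation argument in the presence of eventual missing edges: a process $p$ sends to \emph{all} neighbors in $\mathcal{N}_p$ (the underlying-graph neighborhood), but only the sends along \emph{eventual} edges are guaranteed to arrive, so one must be careful that propagation still reaches everyone — this is exactly why connectedness of $U^\omega_g$ (not just of $U_g$) is the right hypothesis, and why the bound is $diam(U^\omega_g)$ rather than $diam(U_g)$. A secondary subtlety is bookkeeping with the asynchronous model: several edges may appear out of order, messages carry snapshots of varying freshness, and a process may receive an $add$ message whose content it already knows (handled by the guard $E_q\setminus E_p\neq\emptyset$); I would note that monotonicity of each $E_p$ (it only grows, and is bounded by the finite set $E$) guarantees that the re-propagation cascade terminates, so only finitely many messages are ever sent and the execution indeed reaches a stable suffix.
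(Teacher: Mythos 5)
Your plan is correct and follows essentially the same approach as the paper, which in fact only offers the informal version of this argument (the last edge to appear is relayed hop by hop along edges of $U^\omega_g$, each hop costing at most one communication step since \textbf{Send\_retry} over an infinitely-often-present edge is guaranteed to deliver, hence at most $diam(U^\omega_g)$ steps after the starting time). Your write-up is if anything more careful than the paper's, notably in separating safety from liveness, in pinning the count of steps to the starting time at which all of $E$ has appeared (so every $\mathcal{N}_p$ is complete), and in flagging that only the upper bound is proved here with tightness deferred to the optimality section.
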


\subsection{Time Optimality}\label{sub:UGcomplexity}

In this section, we interest in a lower bound result on the time complexity of underlying graph computation. We restrict ourselves to greedy algorithms that are the most natural ones for this problem. We define a {\em greedy algorithm} for the underlying graph computation as an algorithm that satisfies the following property. The initial output of any process is an empty graph and the graph outputted by a process can only grow (in the sense of inclusion) over time. In other words, such an algorithm ensures that, once a process start to output a given edge or process, this latter always appears in the output of this process afterwards. Note that Algorithm~\ref{algo:ug} falls in this category.

In the following, we prove that no greedy algorithm for underlying graph computation on $\mathcal{COT}$ can exhibit a better time complexity than our algorithm. Indeed, we prove that there exists, for any greedy algorithm, a TVG $g$ in $\mathcal{COT}$ such that this algorithm needs $diam(U^\omega_g)$ steps to compute the underlying graph of $g$. Note that the complexity of the underlying graph computation depends surprisingly of a parameter of the \emph{eventual} underlying graph. 

We need a technical lemma for the proof of this optimality result.

\begin{lemma}\label{lem:optimality}
For any greedy algorithm $\mathcal{A}$ that satisfies the underlying computation graph, for any TVG $g=(V,E,\mathcal{T},\rho,\zeta,\phi)$ in $\mathcal{COT}$, for any edge $e\in E$ that is not a cut-edge of $U^\omega_g$, for any process $p\in V$, for any $t\in\mathcal{T}$, $e$ cannot belong to the graph outputted by $p$ in the execution of $\mathcal{A}$ on $g$ at time $t$ if there exists no temporal path from one extremity of $e$ to $p$ that starts after the first appearance of $e$ in $g$ and ends before $t$.
\end{lemma}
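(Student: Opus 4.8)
The plan is to argue by contradiction, using an indistinguishability (adversary) argument tailored to the definition of a greedy algorithm. Suppose, for some greedy algorithm $\mathcal{A}$, some TVG $g \in \mathcal{COT}$, some non-cut-edge $e$ of $U^\omega_g$, some process $p$, and some time $t$, the edge $e$ belongs to the graph outputted by $p$ at time $t$ in the execution of $\mathcal{A}$ on $g$, even though there is no temporal path from either extremity of $e$ to $p$ starting after the first appearance of $e$ and ending before $t$. I would then construct a second TVG $g'$ in which $e$ is genuinely absent (or present only on a part of the lifetime that is causally invisible to $p$ before $t$), show that $p$'s execution in $g'$ is identical to its execution in $g$ up to time $t$, and hence that $p$ wrongly outputs $e$ in $g'$ forever (by greediness), contradicting the underlying graph specification for $g'$.

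The key steps, in order: (1) Let $u,v$ be the extremities of $e$ and let $\tau$ be the first appearance of $e$ in $g$. Define the set $S$ of all (process, time) pairs that are causally reachable from $\{u,v\}$ after $\tau$ — i.e., all $(r,s)$ such that there is a temporal path from $u$ or $v$ to $r$ starting after $\tau$ and arriving by time $s$. By hypothesis, $(p,t') \notin S$ for all $t' \le t$. (2) Build $g'$ from $g$ by deleting the edge $e$ entirely from $E$ — here is where the non-cut-edge hypothesis is used: since $e$ is not a cut-edge of $U^\omega_g$, removing it leaves $U^\omega_{g'}$ connected, and by keeping every other edge's presence function unchanged one checks that $g'$ is still in $\mathcal{COT}$ (every pair still has infinitely many journeys, rerouted around $e$). (3) Show by induction on configurations (equivalently on the causal past) that the execution of $\mathcal{A}$ on $g'$, restricted to the states of processes and messages that are \emph{not} in the causal future of $e$'s first appearance, coincides with the corresponding restriction of the execution on $g$; in particular $p$'s local state, and hence $p$'s output, is the same in both executions at every time $\le t$. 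The point is that the only way information about $e$ can reach $p$ is along a temporal path from $u$ or $v$ starting after $\tau$, and no such path reaches $p$ before $t$. (4) Conclude: in $g'$, process $p$ outputs a graph containing $e \notin E'$ at time $t$; by the greedy property $p$ keeps outputting $e$ forever; so $\mathcal{A}$ violates the underlying graph specification on $g' \in \mathcal{COT}$, a contradiction.

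The main obstacle I expect is step (3), the causal-coupling argument: one must make precise the notion of "the part of the execution not in the causal future of $(e,\tau)$" in an asynchronous, message-passing TVG where the adversary also controls message delays, and then show that the two executions can be scheduled so that this part is literally identical. The delicate point is that deleting $e$ may destroy some journeys that were used (for other messages) in the execution on $g$; one has to re-choose the adversary's schedule in $g'$ so that every message that mattered for $p$'s state before $t$ still gets delivered with the same effective timing, using that $g'$ is connected-over-time to reroute. A secondary subtlety is handling messages that were \emph{sent across $e$ itself} in $g$: such a message, together with the event of $e$'s appearance at $u$ or $v$, is precisely what seeds the causal future of $(e,\tau)$, so these are exactly the events allowed to differ, and one must verify they never propagate to $p$ before $t$. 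Once the coupling is set up carefully, the contradiction with greediness and the specification is immediate.
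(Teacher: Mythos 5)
Your proposal matches the paper's proof essentially step for step: the same contradiction setup, the same construction of $g'$ by suppressing $e$ (with the non-cut-edge hypothesis guaranteeing $g'\in\mathcal{COT}$), the same indistinguishability argument via the absence of temporal paths from the extremities of $e$ to $p$, and the same conclusion via greediness and the specification. The paper states the causal-coupling step (your step (3)) more tersely than you do, but your added care there is a refinement, not a divergence.
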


\begin{proof}
By contradiction, assume that there exists a greedy algorithm $\mathcal{A}$ that satisfies the underlying computation graph, a TVG $g=(V,E,\mathcal{T},\rho,\zeta,\phi)$ in $\mathcal{COT}$, an edge $\hat{e}\in E$ that is not a cut-edge of $U^\omega_g$, a process $\hat{p}\in V$, and a time $\hat{t}\in\mathcal{T}$ such that $\hat{e}$ appears in the graph outputted by $\hat{p}$ in the execution of $\mathcal{A}$ on $g$ at time $\hat{t}$ and that there exists no temporal path from one extremity of $\hat{e}$ to $\hat{p}$ that starts after the first appearance of $\hat{e}$ in $g$ and ends before $\hat{t}$.

Then, consider the TVG $g'=(V,E,\mathcal{T},\rho',\zeta,\phi)$ with:
\[
\forall e\in E,\forall t\in\mathcal{T},\rho'(e,t)=\begin{cases}
0 \text{ if } e=\hat{e}\\
\rho(e,t) \text{ otherwise}
\end{cases}
\]

Note that, according to the assumption that $\hat{e}$ is not a cut-edge of $U^\omega_g$, $g'$ belongs to $\mathcal{COT}$. Hence, due to the construction of $g$ and the determinism of $\mathcal{A}$, the process $\hat{p}$ receives exactly the same messages before time $\hat{t}$ in $g$ and $g'$ (the assumption on temporal paths after the first appearance of $\hat{e}$ in $g$ ensures us that the fact to remove $\hat{e}$ in $g'$ is not detectable by $\hat{p}$ before time $\hat{t}$). In other words, $\hat{p}$ cannot distinguish the executions of $\mathcal{A}$ on $g$ and $g'$ before $\hat{t}$. As a consequence, $\hat{e}$ appears in the graph outputted by $\hat{p}$ in the execution of $\mathcal{A}$ on $g'$ at time $\hat{t}$. As $\mathcal{A}$ is a greedy algorithm, this edge never disappear of the output of $\hat{p}$ in the execution of $\mathcal{A}$ on $g'$ after time $\hat{t}$. This is contradictory with the fact that $\mathcal{A}$ satisfies the underlying graph specification on $\mathcal{COT}$ since $\hat{e}$ does not belongs to $U_{g'}$ and proves the lemma.
\end{proof}

We are now ready to prove the following result.

\begin{theorem}\label{th:optimality}
For any greedy algorithm $\mathcal{A}$ that satisfies the underlying graph specification on $\mathcal{COT}$, there exists a TVG $g$ of $\mathcal{COT}$ such that the convergence time of $\mathcal{A}$ is at least $diam(U^\omega_g)$ steps.
\end{theorem}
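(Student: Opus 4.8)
The plan is to exhibit, for every integer $D\geq 1$, a single TVG $g\in\mathcal{COT}$ with $diam(U^\omega_g)=D$ on which the convergence time of $\mathcal{A}$ is at least $D$ steps (in fact the same $g$ works simultaneously for every greedy algorithm). The construction combines Lemma~\ref{lem:optimality} with the classical fact that learning a far-away edge on a cycle takes a number of communication rounds proportional to the diameter. Concretely, I would let $g$ have vertex set $\{v_0,v_1,\ldots,v_{2D}\}$ and underlying graph the odd cycle $C_{2D+1}$ (edges $\{v_i,v_{i+1}\}$, indices modulo $2D+1$), with every edge permanently present, constant edge latency $\zeta\equiv 1$, and instantaneous internal actions $\phi\equiv 0$. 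Since $C_{2D+1}$ is connected, $g\in\mathcal{COT}$; since no edge is ever absent, $U_g=U^\omega_g=C_{2D+1}$ and $diam(U^\omega_g)=D$. The reason for choosing the \emph{odd} cycle is that the edge $\hat e=\{v_D,v_{D+1}\}$ is not a cut-edge of $U^\omega_g$ and yet \emph{both} of its extremities lie at distance exactly $D$ from $v_0$ in $U^\omega_g$: going around the cycle one way reaches $v_D$ in $D$ hops and $v_{D+1}$ in $D+1$, and the other way reaches $v_{D+1}$ in $D$ hops and $v_D$ in $D+1$.

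Next I would pin down the execution $e$ of $\mathcal{A}$ on $g$ in which every message delivered by \textbf{Send\_retry} incurs a delay exactly equal to the latency of the edge it crosses at the moment of its sending. This is a legal execution precisely because every edge of $g$ is permanently present, so every message can indeed be delivered within one latency unit; consequently every communication step of $e$ has length exactly $1$ time unit. Moreover $NPS_{\mathcal{UG}}(g)=\{E\}$ and all edges of $E$ are present from time $0$, so the starting time of $e$ is $0$.

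It then remains to lower-bound the first time at which the underlying graph specification can hold in $e$. By Lemma~\ref{lem:optimality} (applied to the non-cut edge $\hat e$, the process $v_0$, and the TVG $g\in\mathcal{COT}$), for $\hat e$ to belong to the graph output by $v_0$ at a time $t$ there must exist a temporal path $\mathcal{J}=\{(e_1,t_1),\ldots,(e_k,t_k)\}$ from an extremity of $\hat e$ to $v_0$ that starts after the first appearance of $\hat e$ (so $t_1>0$) and delivers its message to $v_0$ before $t$. The underlying edge sequence of $\mathcal{J}$ is a walk of $U_g=C_{2D+1}$ from an extremity of $\hat e$ to $v_0$, so $k$ is at least the distance in $U_g$ between these vertices, namely $D$. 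Since $\zeta\equiv 1$, the journey constraints give $t_{i+1}\geq t_i+1$ for all $i$, hence the message reaches $v_0$ no earlier than $t_1+k>D$, which forces $t>D$. Because the underlying graph specification requires every process---in particular $v_0$---to eventually output all of $U_g$, and because $\mathcal{A}$ is greedy (outputs never shrink), the specification cannot be satisfied before time $D$. Measured from the starting time $0$ in communication steps of length $1$, the convergence time of $\mathcal{A}$ on $g$ is therefore at least $D=diam(U^\omega_g)$, which is the claimed bound.

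I expect two points to require the most care. The first is the use of an \emph{odd} cycle: it is essential that some non-cut edge have \emph{both} endpoints at the maximal distance $D$ from a common vertex, since on an even cycle $C_{2D}$ the edge $\{v_D,v_{D+1}\}$ has one endpoint at distance $D$ but the other at distance $D-1$, which would only yield a bound of $D-1$; getting exactly $diam(U^\omega_g)$ hinges on this parity choice. The second is making the length of a communication step provably equal to $1$: this is why I commit to the execution in which every message delay equals the latency of the crossed edge, so that the $D$ units of wall-clock time forced by the length of any admissible temporal path really do count as $D$ steps rather than fewer; along the way one must also check that this is a valid execution of \textbf{Send\_retry} under the model and that the starting time genuinely coincides with time $0$ here, which it does because $NPS_{\mathcal{UG}}(g)=\{E\}$ and no edge of $g$ is ever absent.
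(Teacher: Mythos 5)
Your proof is correct and establishes the theorem as stated; it follows the same overall strategy as the paper --- build a witness TVG with unit edge latency and zero process latency, pick a non-cut edge whose relevant extremities lie at distance $diam(U^\omega_g)$ from some observer, invoke Lemma~\ref{lem:optimality} to force a temporal path of that length before the observer may output the edge, and convert the wall-clock delay into communication steps --- but with a genuinely different witness. The paper uses a dynamic family $g_k$ (a long path whose edges appear only from time $1$, together with two chords present only before time $1$), deliberately arranged so that $diam(U_{g_k})<diam(U^\omega_{g_k})$; this is what lets the authors make the point that the governing parameter is the \emph{eventual} underlying graph, a distinction your static odd cycle cannot exhibit since $U_g=U^\omega_g$ there. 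In exchange, your witness is technically cleaner on the two points you yourself flag. First, your target edge $\{v_D,v_{D+1}\}$ is not a cut-edge of $U^\omega_g$ (a cycle has none), so Lemma~\ref{lem:optimality} applies verbatim; in the paper's construction $U^\omega_{g_k}$ is a path, every edge of which is a cut-edge, so the lemma as literally stated does not cover the edge $\{p_{k-1},p_k\}$ to which it is applied. Second, your parity observation --- both extremities of the chosen edge at distance exactly $D$ from $v_0$ on $C_{2D+1}$ --- is precisely what is needed to reach $diam(U^\omega_g)$ rather than $diam(U^\omega_g)-1$, since the lemma only requires a temporal path from \emph{one} extremity. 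Both arguments prove the theorem; yours trades the ``$U_g$ versus $U^\omega_g$'' moral for a witness that satisfies all hypotheses of the key lemma.
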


\begin{proof}
Let $\mathcal{A}$ be a greedy algorithm that satisfies the underlying graph specification on $\mathcal{COT}$. Then, let us define the following family of TVGs. For any given $k\in\mathbb{N}^*$, let $g_k=(V_k,E_k,\mathbb{R}^+,\rho_k,\zeta_k,\phi_k)$ be the TVG defined by $V_k=\{p_0,\ldots,p_{3k}\}$, $E_k=\{\{p_i,p_{i+1}\}|i\in\{0,\ldots,3k-1\}\}\cup\{\{p_0,p_{2k}\},\{p_{2k},p_{3k}\}\}$, and
\[\forall e\in E_k,\forall t\in\mathbb{R}^+,\rho_k(e,t)=
\begin{cases}
1 \text{ if } e\in\{\{p_0,p_{2k}\},\{p_{2k},p_{3k}\}\} \text{ and } t< 1\\
0 \text{ if } e\notin\{\{p_0,p_{2k}\},\{p_{2k},p_{3k}\}\} \text{ and } t< 1\\
0 \text{ if } e\in\{\{p_0,p_{2k}\},\{p_{2k},p_{3k}\}\} \text{ and } t\geq 1\\
1 \text{ if } e\notin\{\{p_0,p_{2k}\},\{p_{2k},p_{3k}\}\} \text{ and } t\geq 1
\end{cases}
\]
\[\forall e\in E_k,\forall t\in\mathbb{R}^+,\zeta_k(e,t)=1\]
\[\forall p\in V_k,\forall t\in\mathbb{R}^+,\phi_k(p,t)=0\]

\begin{figure}
  \centering 
  \includegraphics[scale=0.4]{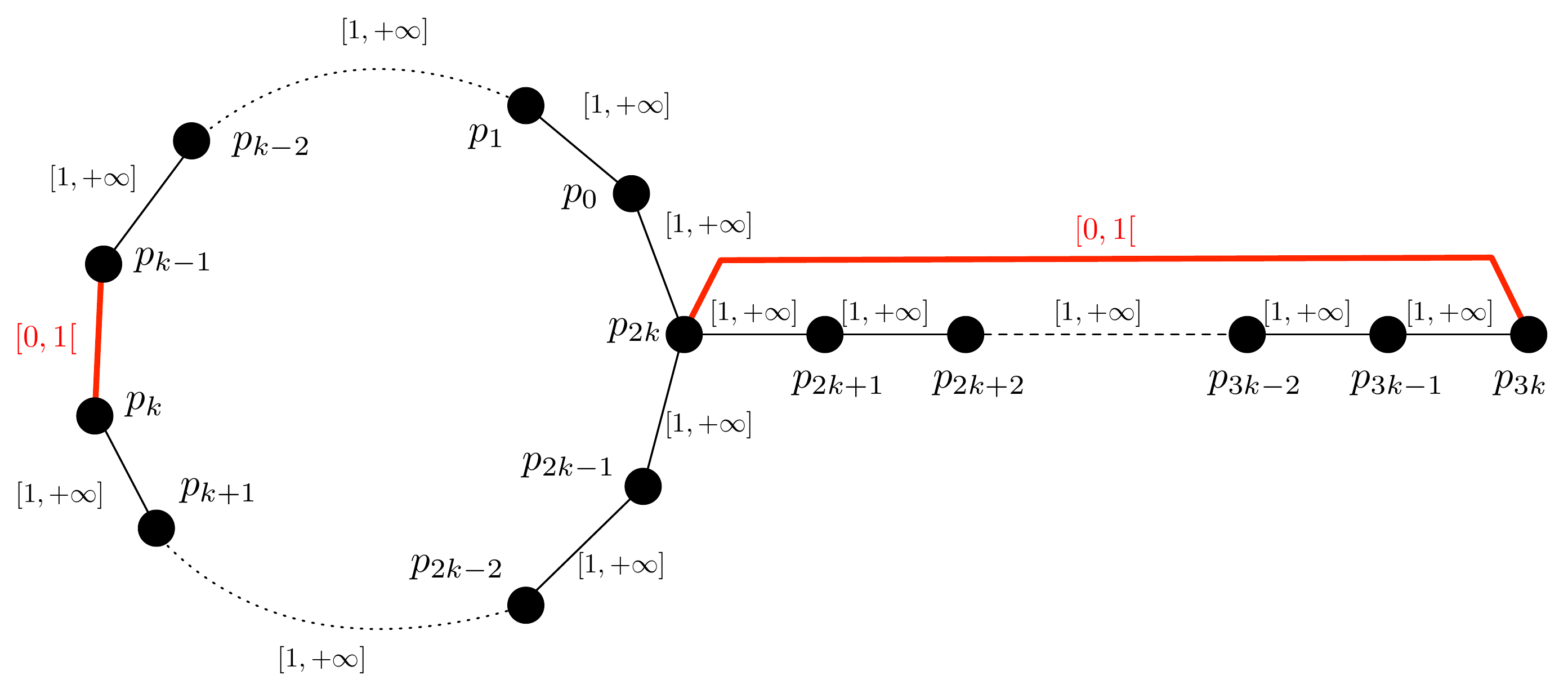}
  \caption{An illustration of the TVGs family in the proof of Theorem \ref{th:optimality}. \label{fig:optimality}}
\end{figure}

Refer to Figure \ref{fig:optimality} for an example of such a $g_k$. Note that, for any $k\in\mathbb{N}^*$, we have $diam(U^\omega_{g_k})=2k$ (and $diam(U_{g_k})<diam(U^\omega_{g_k})$ since $diam(U_{g_k})=k+1k$). As this graph is connected, $g_k$ belongs to $\mathcal{COT}$. By construction of $g_k$, the starting time of the execution of $\mathcal{A}$ on $g_k$ is $1$ for any $k\in\mathbb{N}^*$ (recall that $NPS_\mathcal{UG}(g)=\{E\}$). Note that, due to the choice of the latency function, any communication step of the execution of $\mathcal{A}$ on $g_k$ takes exactly one time unit.

Consider $e_k$ the execution of $\mathcal{A}$ on $g_k$ for any $k\in\mathbb{N}^*$. From Lemma \ref{lem:optimality}, we know that the edge $\{p_{k-1},p_k\}$ cannot appear in the graph outputted by $p_{3k}$ in $e_k$ before there exists at least one temporal path from $p_{k-1}$ or $p_k$ to $p_{3k}$. Note that the construction of $g_k$ implies that such a temporal path (after time $1$) needs at least $2k$ steps (the length of the path from $p_{k-1}$ or $p_k$ to $p_{3k}$ since $g_k$ is static after time $1$). As the edge $\{p_{k-1},p_k\}$ must eventually appear in the output of any process in $e_k$ by assumption on $\mathcal{A}$, we obtain that the convergence time of $\mathcal{A}$ is at least $diam(U^\omega_{g_k})$ steps, that ends the proof.
\end{proof}

\section{Minimal Dominating Set Construction}\label{sec:MDS}

Minimal dominating set construction is a classical problem in the context of distributed computing since this spanning structure have interesting properties for a lot of practical problems as clustering. Recall that, in a static distributed system, a dominating set $D$ is a subset of processes of the system such that each process that does not belong to $D$ have at least one neighbor in $D$. Such a dominating set is minimal when it has is no strict subset that is also a dominating set.

Regarding dynamic distributed systems, two different approaches have been proposed to handle minimal dominating set problem. We survey them quickly here and show that these definitions seem not relevant in our context, that motivates the need of our new definition presented in this section.

The most natural way to extend minimal dominating set definition in the context of dynamic systems is presented in \cite{WDCG12}. In this work, the dynamic graph is seen as a sequence of static graphs and a new minimal dominating set is computed at each topological change. This approach is not suitable in the case of highly dynamic systems since the system may be always in computation phase (the computation of the new dominating set at each topological change is not instantaneous). In this case, the dominating set may be never stable and is then useless for the application that required it.

The second approach, proposed by \cite{CF13r}, consists in computing a stable dominating set on the underlying graph of the TVG. This approach is interesting since the outputted dominating set is stable in spite of the dynamicity of the system but is still not suitable for our purpose. Indeed, as the dominating set is computed on the underlying graph that may contain eventual missing edges, it is possible for a process to be dominated only through such edges. In other words, a dominated process may have eventually only dominated neighbors, that is counter-intuitive for a minimal dominating set and makes sense only in TVGs where there is no eventual missing edges.

To overcome flaws of precedent definitions in our context of highly dynamic distributed systems (captured by the class of TVGs $\mathcal{COT}$), we propose a third definition. In this definition,  we require the outputted minimal dominating set to be stable and each dominated process to be infinitely often neighbor of at least one dominating process. In other words, we want to compute a minimal dominating set on the \emph{eventual} underlying graph. Note that this definition is exactly the same as the one of \cite{CF13r} in TVGs where there is no eventual missing edges. 

\begin{definition}[Minimal dominating set over time] 
A set of processes $M$ is a minimal dominating set over time (MDST for short) of a TVG $g$ if $M$ is a minimal dominating set of $U^\omega_g$.
\end{definition}

We now specify the minimal dominating set construction problem over TVGs as follows.

\begin{specification}[Minimal dominating set]
An algorithm $\mathcal{A}$ satisfies the minimal dominating set specification for a class of TVGs $\mathcal{C}$ if the execution $e=\gamma_0,\gamma_1,\ldots$ of $\mathcal{A}$ on every TVG $g$ of $\mathcal{C}$ has a suffix $e_i=\gamma_i,\gamma_{i+1},\ldots$ for a given $i\in\mathbb{N}$ such that each process outputs constantly a boolean value in any configuration of $e_i$ and that the set of processes outputting true is a minimal dominating set overt time of $g$.
\end{specification}

\subsection{Preliminaries}\label{sub:MDSprelem}

In this section, we present some preliminary results that are needed in the following. First, we introduce the definition of a strong minimal dominating set of a graph as a dominated set of any connected spanning subgraph of this graph. In Section \ref{sub:MDSimp}, we prove that the existence of such a set in the underlying graph of a TVG is necessary to the existence of an algorithm to construct a minimal dominating set over time of this TVG. We claim in Section \ref{sub:MDSalgo} that this condition is also sufficient. To prove this result, we use a characterization of graphs that admit a strong minimal dominating set that we present in the end of this preliminary section.
 
\begin{definition}[Strong minimal dominating set] 
A strong minimal dominating set (SMDS for short) of a (static) graph $g$ is a subset of processes of $g$ that is a minimal dominating set of every connected spanning subgraph of $g$.
\end{definition}

The following lemma follows directly from definitions and legitimates our interest for strong minimal dominating sets.

\begin{lemma}\label{lem:smdsismdst}
If the underlying graph of a TVG $g\in \mathcal{COT}$ admits a strong minimal dominating set $M$ then $M$ is a minimal dominating set over time of $g$.
\end{lemma}

The next result provides us a characterization of (static) graphs that admits a SMDS. We use this characterization in our minimal dominating set construction algorithm in the next section.

\begin{lemma}\label{lem:characterization}
For any (static) graph $g$ and any minimal dominating set $M$ of $g$, $M$ is a strong minimal dominating set of $g$ if and only if the set of edges $\{\{p,q\}| q\in M\cap\mathcal{N}_p\}$ is a cut-set in $g$ for every process $p\in V\setminus M$.
\end{lemma}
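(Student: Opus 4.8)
The plan is to prove both directions via contraposition, using the following elementary observation: for a minimal dominating set $M$ of $g$ and a process $p \in V \setminus M$, let $D_p = \{\{p,q\} \mid q \in M \cap \mathcal{N}_p\}$ be the set of edges connecting $p$ to its dominators. I would first characterize when $M$ fails to be an SMDS. By definition $M$ is an SMDS iff $M$ is a minimal dominating set of every connected spanning subgraph $h$ of $g$. Since $M$ already dominates $g$, and removing edges can only shrink neighborhoods, the only way $M$ can fail in some connected spanning subgraph $h$ is that $M$ stops being dominating in $h$ (minimality is inherited downward: if some strict subset $M'$ dominated $h$, then since $h$ spans, $M'$ would dominate $g$ too, contradicting minimality of $M$ in $g$ — I should state this sub-argument carefully). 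So the characterization reduces to: $M$ is an SMDS of $g$ iff $M$ is dominating in every connected spanning subgraph of $g$ iff for every $p \in V\setminus M$ and every connected spanning subgraph $h$, $p$ retains at least one neighbor in $M$ inside $h$.

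For the forward direction (contrapositive), suppose $D_p$ is not a cut-set of $g$ for some $p \in V \setminus M$. Then $g \setminus D_p$ is still connected and spanning. But in $g \setminus D_p$, process $p$ has no neighbor in $M$ (we removed exactly the edges from $p$ to $M \cap \mathcal{N}_p$), and $p \notin M$, so $M$ does not dominate $p$ in this connected spanning subgraph — hence $M$ is not an SMDS. For the converse (also contrapositive), suppose $M$ is not an SMDS. By the reduction above, there is a connected spanning subgraph $h$ and a process $p \in V \setminus M$ with no $M$-neighbor in $h$; this means $h$ contains none of the edges of $D_p$, i.e. $h \subseteq g \setminus D_p$. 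Since $h$ is connected and spanning, $g \setminus D_p$ is connected and spanning as well, so $D_p$ is not a cut-set of $g$. Together these give the equivalence.

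The step I expect to be the main (if modest) obstacle is the downward-inheritance of \emph{minimality}: I must be sure that once $M$ is dominating in a connected spanning subgraph $h$, it is automatically \emph{minimal} there, so that the whole question collapses to the domination property. The argument is that any dominating set of $h$ is a dominating set of $g$ (adding edges only helps), so a strict subset of $M$ dominating $h$ would contradict minimality of $M$ in $g$; I would spell this out explicitly since it is what makes the clean reduction to cut-sets valid. Everything else is bookkeeping about which edges are present in which subgraph.
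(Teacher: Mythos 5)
Your proposal is correct, and both directions ultimately rest on the same core move as the paper: deleting the edge set $D_p=\{\{p,q\}\mid q\in M\cap\mathcal{N}_p\}$ (or observing that a witnessing subgraph avoids it) and playing connectivity of the resulting spanning subgraph against the domination of $p$. The forward direction is essentially identical to the paper's. Where you genuinely diverge is in the converse. The paper splits the failure of ``$M$ is a minimal dominating set of $s_g$'' into two cases --- $M$ not dominating in $s_g$, and $M$ dominating but not minimal in $s_g$ --- and disposes of the second case by a somewhat delicate argument about processes that ``properly dominate'' a private neighbor, eventually locating a vertex $r$ whose unique dominating edge $\{p,r\}$ is missing from the connected subgraph $s_g$. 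You instead prove up front that minimality is inherited downward: if $M$ dominates a spanning subgraph $h$ of $g$ and $M$ is minimal in $g$, then $M$ is minimal in $h$, because any dominating set of $h$ is a dominating set of $g$. This observation (which you correctly identify as the one step needing care, and which is easy and valid) collapses the whole question to pure domination and makes the second case of the paper's analysis vacuous. Your route is shorter, avoids the private-neighbor machinery entirely, and makes it transparent that the SMDS property is really a statement about domination surviving edge deletion; the paper's case analysis buys nothing extra here. Both proofs are complete and correct.
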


\begin{proof}
First, we prove that, for any SMDS $M$ of a graph $g$, the set of edges $\{\{p,q\}| q\in M\cap\mathcal{N}_p\}$ is a cut-set in $g$ for every process $p\in V\setminus M$. By contradiction, assume that there exists a SMDS $M$ of a graph $g=(V,E)$ such that the set of edges $\{\{p,q\}| q\in M\cap\mathcal{N}_p\}$ is not a cut-set in $g$ for a process $p\in V\setminus M$. Let $s_g=(V,E')$ the subgraph of $g$ defined by $E'=E\setminus\{\{p,q\}| q\in M\cap\mathcal{N}_p\}$. By assumption, $s_g$ is a connected graph. Moreover, in $s_g$ the process $p$ has no neighbor in $M$, that means that $M$ is not a minimal dominating set of $s_g$. This contradicts the fact that $M$ is a SMDS of $g$ and proves the necessity of the condition.

Second, we prove that any minimal dominating set $M$ of a graph $g$ such that the set of edges $\{\{p,q\}| q\in M\cap\mathcal{N}_p\}$ is a cut-set in $g$ for every process $p\in V\setminus M$ is a SMDS of $g$. By contradiction again, assume that there exists a minimal dominating set $M$ of a graph $g=(V,E)$ such that the set of edges $\{\{p,q\}| q\in M\cap\mathcal{N}_p\}$ is a cut-set in $g$ for every process $p\in V\setminus M$ is not a SMDS of $g$. By definition of a SMDS, there exists a connected subgraph $s_g=(V,E')$ of $g$ such that $M$ is not a minimal dominating set of $s_g$. Let us study the two following cases.

\begin{enumerate}
\item $M$ is not a dominating set of $s_g$. Then, there exists a process $p$ such that no neighbors of $p$ in $s_g$ belongs to $M$. As $s_g$ is connected, that means that the set $\{\{p,q\}| q\in M\cap\mathcal{N}_p\}$ is not a cut-set in $g$, that is contradictory with the initial assumption on $M$.
\item $M$ is a dominating set of $s_g$ but is not minimal. We say that a process of a dominating set properly dominates one of its neighbor if it is the only dominating process in the neighborhood of this dominated process. Then, we know that there exists, in $s_g$, two neighbors $p\in M$ and $q\in M$ such that $p$ does not dominate properly any of its neighbors. As $M$ is a minimal dominating set in $g$, we deduce that $p$ dominates properly at least one of its neighbors $r\in V\setminus M$ (recall that $p$ and $q$ are neighbors in $g$ by construction). That means that the set of edges $\{\{r,s\}| s\in M\cap\mathcal{N}_r\}=\{\{p,r\}\}$ is not a cut-set of $g$ (since this edge does not belong to $s_g$ that is connected). This is contradictory with the initial assumption on $M$.
\end{enumerate}

These contradictions show us the sufficiency of the condition and ends the proof.
\end{proof}

\subsection{Impossibility Result}\label{sub:MDSimp}

\begin{figure*}[htb]
  \centering 
  \includegraphics[scale=0.4]{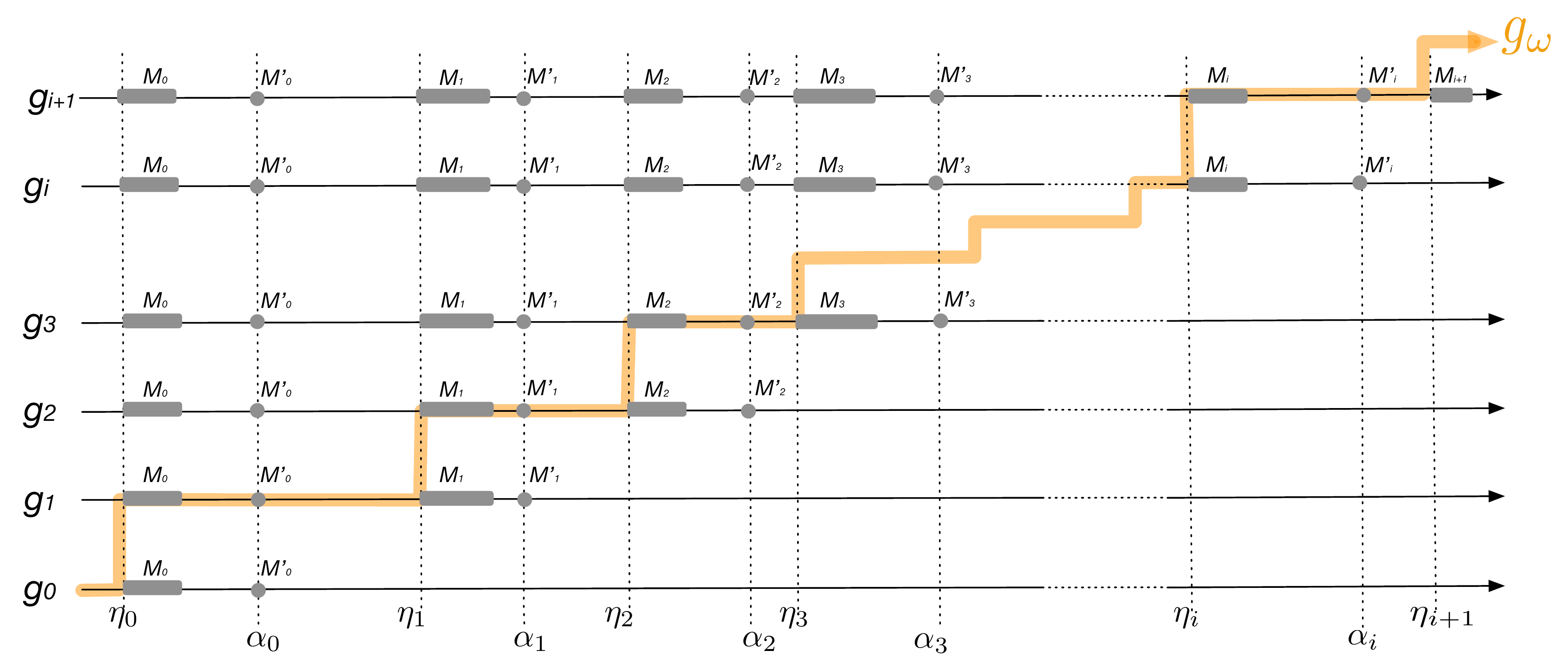}
  \caption{An illustration of the sequence $(g_n)_{n\in\mathbb{N}}$ used in the proof of Theorem \ref{th:MDSimp}. \label{fig:impossibility}}
\end{figure*}

The proof of our impossibility result presented in Theorem \ref{th:MDSimp} makes use of a generic framework we proposed in another work. We recall here the minimal definitions and results to understand our proof. Due to the lack of space, the interested reader is referred to \cite{BDKP14r} for more details. 

\paragraph{Summary of \cite{BDKP14r}} For a given time domain $\mathbb{T}$, a given static graph $(V,E)$ and a given latency function $\zeta$, let us consider the set $\mathcal{G}_{(V,E),\mathbb{T},\zeta}$ of all TVGs over $\mathbb{T}$ that admit $(V,E)$ as underlying graph and $\zeta$ as latency function. For the sake of clarity, we will omit the subscript $(V,E),\mathbb{T},\zeta$ and simply denote this set by $\mathcal{G}$. Remark that two distinct TVGs of $\mathcal{G}$ can be distinguished only by their presence function. For any TVG $g$ in $\mathcal{G}$, let us denote its presence function by $\rho_g$. We define now the following metric $d_\mathcal{G}$ over $\mathcal{G}$. If $g=g'$, then $d_\mathcal{G}(g,g')=0$. Otherwise, $d_\mathcal{G}(g,g')=2^{-\lambda}$ with $\lambda = \text{Sup }\{ t\in\mathbb{T}|\forall t'\leq t,\forall e\in E,\rho_g(e,t') = \rho_{g'}(e,t')\}$.

For a given algorithm $\mathcal{A}$ and a given TVG $g$, let us define the $(\mathcal{A},g)$-output as the function that associate to any time $t\in\mathbb{T}$ the state of $g$ at time $t$ when it executes $\mathcal{A}$. We say that $g$ is the supporting TVG of this output. Let us consider the set $\mathcal{O}_{\mathcal{A},\mathcal{G}}$ of all $(\mathcal{A},g)$-outputs over all TVGs $g$ of $\mathcal{G}$. For the sake of clarity, we will omit the subscript $\mathcal{A},\mathcal{G}$ and simply denote this set by $\mathcal{O}$. Remark that two distinct output of $\mathcal{O}$ can be distinguished only by their supporting TVG. For any output $o$ in $\mathcal{O}$, let us denote its supporting TVG by $g_o$. We define now the following metric $d_\mathcal{O}$ over $\mathcal{O}$. If $o=o'$, then $d_\mathcal{O}(o,o')=0$. Otherwise, $d_\mathcal{O}(o,o')=2^{-\lambda}$ with $\lambda = \text{Sup }\{ t\in\mathbb{T}|\forall t'\leq t,o(t') = o'(t')\}$.

Once we have observed that the metric spaces $(\mathcal{G},d_\mathcal{G})$ and $(\mathcal{O},d_\mathcal{O})$ are complete, we are now able to recall the main result of \cite{BDKP14r}. Intuitively, this theorem ensures us that, if we take a sequence of TVGs with ever-growing common prefixes, then the sequence of corresponding outputs also converges. Moreover, we are able to describe the output to which it converges as the output that corresponds to the TVG that shares all commons prefixes of our TVGs sequence. This result is useful since it allows us to construct counter-example in the context of impossibility results. Indeed, it is sufficient to construct a TVG sequence (with ever-growing common prefixes) and to prove that their corresponding outputs violates the specification of the problem for ever-growing time to exhibit an execution that violates infinitely often the specification of the problem. More formally, we have:

\begin{theorem}\label{th:convergence}
For any deterministic algorithm $\mathcal{A}$, if a sequence $(g_n)_{n\in\mathbb{N}}$ of $\mathcal{G}$ converges to a given $g_\omega\in\mathcal{G}$, then the sequence $(o_n)_{n\in\mathbb{N}}$ of the $(\mathcal{A},g_n)$-outputs converges to $o_\omega\in\mathcal{O}$. Moreover, $o_\omega$ is the $(\mathcal{A},g_\omega)$-output.
\end{theorem}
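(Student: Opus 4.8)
The plan is to reduce the statement to one intuitively clear fact: since $\mathcal{A}$ is deterministic and every TVG of $\mathcal{G}$ has the \emph{same} vertex set, edge set and latency function $\zeta$, the part of an execution of $\mathcal{A}$ that takes place up to a time $t$ is completely determined by the restriction of the presence function to $[0,t]$. Once this is established, convergence of the outputs falls out of the definitions of the two metrics $d_\mathcal{G}$ and $d_\mathcal{O}$, and the limit is identified on the nose.

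Concretely, I would first fix a deterministic algorithm $\mathcal{A}$ and define $o_\omega$ to be the $(\mathcal{A},g_\omega)$-output; since $g_\omega\in\mathcal{G}$ we automatically have $o_\omega\in\mathcal{O}$, so the whole content to prove is that $(o_n)_{n\in\mathbb{N}}$ converges to this particular $o_\omega$. For each $n$, write $\lambda_n=\text{Sup}\{t\in\mathbb{T}\mid\forall t'\leq t,\forall e\in E,\rho_{g_n}(e,t')=\rho_{g_\omega}(e,t')\}$, so that $d_\mathcal{G}(g_n,g_\omega)=2^{-\lambda_n}$ and the hypothesis $g_n\to g_\omega$ is exactly the statement $\lambda_n\to+\infty$.

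The heart of the argument is the following lemma: for any two TVGs $g,g'\in\mathcal{G}$ and any $t\in\mathbb{T}$, if $\rho_g$ and $\rho_{g'}$ coincide on $\{t'\mid t'\leq t\}$, then the $(\mathcal{A},g)$-output and the $(\mathcal{A},g')$-output coincide on $\{t'\mid t'\leq t\}$. I would prove it by induction on the (possibly infinite) sequence of events of an execution, ordered by their dates: a topological event at a date $s\leq t$ is, by the construction of the snapshot sequence $\mathcal{S}_g$, determined by the presence function near $s$, hence is the same in both executions; an internal action at a date $s\leq t$ is determined by the current local state of the process (fixed so far by the induction) together with $\mathcal{A}$, hence is the same; and a message delivery at a date $s\leq t$ via the \textbf{Send\_retry} primitive depends only on the date $s'\leq s$ of its invocation (identical by induction) and on the availability and latency of the relevant edge between $s'$ and $s$, i.e.\ on $\rho_g$ restricted to $[s',s]\subseteq\{t'\mid t'\leq t\}$, hence is again the same. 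Therefore the two executions pass through the same configurations at the same dates up to $t$. The main obstacle, and the point deserving the most care, is making this induction rigorous in the continuous-time setting: one must argue that the \emph{set} of events occurring in $[0,t]$, together with their order, is identical in both executions before one can even speak of matching configurations "at the same date'', and one must notice that reading off the output at a time $t$ may, because of the way snapshots are defined on half-open intervals, require the presence function slightly beyond $t$ — a harmless point below, since we shall only ever use the lemma for $t$ strictly smaller than $\lambda_n$, where there is always such slack.

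Finally, applying the lemma with $g=g_n$, $g'=g_\omega$ and an arbitrary $t<\lambda_n$ (pick $t''$ with $t<t''<\lambda_n$; then $g_n$ and $g_\omega$ agree on $[0,t'']$, hence $o_n$ and $o_\omega$ agree on $[0,t'']\supseteq\{t'\mid t'\leq t\}$) shows that $\text{Sup}\{t\mid\forall t'\leq t,\ o_n(t')=o_\omega(t')\}\geq\lambda_n$, so $d_\mathcal{O}(o_n,o_\omega)\leq 2^{-\lambda_n}=d_\mathcal{G}(g_n,g_\omega)$. As the right-hand side tends to $0$, we conclude $d_\mathcal{O}(o_n,o_\omega)\to 0$, i.e.\ $(o_n)_{n\in\mathbb{N}}$ converges to $o_\omega$, which is by construction the $(\mathcal{A},g_\omega)$-output. (Completeness of $(\mathcal{O},d_\mathcal{O})$, already noted, is not actually needed here, since the limit has been exhibited explicitly; it would be required only if one wished to assert convergence without first identifying the limit.)
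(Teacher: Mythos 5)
Your proof is correct. Note that the paper itself does not prove Theorem~\ref{th:convergence}: it is imported verbatim from the companion work \cite{BDKP14r}, so there is no in-paper argument to compare against. Your route is nevertheless exactly the intended one: determinism of $\mathcal{A}$ plus the fact that all TVGs of $\mathcal{G}$ share $V$, $E$, $\zeta$ and $\phi$ means that agreement of the presence functions on a prefix $[0,t]$ forces the two executions to be indistinguishable on that prefix, which makes the map $g\mapsto o$ nonexpansive ($d_\mathcal{O}(o,o')\leq d_\mathcal{G}(g,g')$) and hence continuous; convergence of $(o_n)$ to the $(\mathcal{A},g_\omega)$-output follows immediately, without invoking completeness of $(\mathcal{O},d_\mathcal{O})$. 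You also correctly isolate the only delicate points --- making the event-by-event induction rigorous in continuous time, and the half-open-interval convention for snapshots --- and your device of working at $t<t''<\lambda_n$ disposes of both.
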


\paragraph{Application to minimal dominating set}

We are now in measure to prove our impossibility result. This result states that there exists no deterministic algorithm that satisfies the minimal dominating set specification on a TVG of $\mathcal{COT}$ as soon as the underlying graph of the considered TVG does not admit a strong minimal dominating set. Intuitively, this impossibility comes from the following fact. As no process is able to detect eventual missing edges, the minimal dominated set computed by any algorithm must be a minimal dominated set of any possible eventual underlying graph, that is of any connected subgraph of the underlying graph. In other words, the computed minimal dominated set is a strong minimal dominating set. The existence of such a set is then a necessary condition to the existence of an algorithm to compute a minimal dominating set over time. The main difficulty of the formal proof of this result lies in the construction of the TVGs sequence that allows us to apply Theorem \ref{th:convergence}.

\begin{theorem}\label{th:MDSimp}
For any set of (static) graphs $\mathcal{F}$ containing at least one graph that does not admit a strong minimal dominating set, there exists no deterministic algorithm that satisfies the minimal dominating set specification for $\mathcal{COT}|_\mathcal{F}$.
\end{theorem}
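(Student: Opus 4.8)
The plan is a diagonalization against a fixed bad graph, packaged through the convergence machinery of Theorem~\ref{th:convergence}. Assume for contradiction that some deterministic algorithm $\mathcal{A}$ satisfies the minimal dominating set specification on $\mathcal{COT}|_\mathcal{F}$, and fix a graph $H=(V,E)\in\mathcal{F}$ that admits no strong minimal dominating set. I work inside the set $\mathcal{G}=\mathcal{G}_{(V,E),\mathbb{R}^+,\zeta}$ of \cite{BDKP14r}, taking the constant unit latency $\zeta\equiv 1$, so that a journey along a static path completes in a number of time units equal to its length. The one combinatorial fact I need is immediate from the definition of a strong minimal dominating set: since no subset of $V$ is a minimal dominating set of every connected spanning subgraph of $H$, for \emph{every} $M\subseteq V$ there is a connected spanning subgraph $H'$ of $H$ of which $M$ is not a minimal dominating set.

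Next I would build a sequence $(g_n)_{n\in\mathbb{N}}$ of $\mathcal{G}$ with ever-growing common prefixes. Put $\tau_0=0$, $H_0=H$, and let $g_0$ be $H$ kept present at all times. Inductively, $g_n$ is ``the common prefix up to time $\tau_n$, then the static connected spanning subgraph $H_n$ of $H$ present forever''. Since $g_n$ equals the static connected graph $H_n$ after $\tau_n$, one has $U^\omega_{g_n}=H_n$ (each edge of $H_n$ is present for all $t\geq\tau_n$, whereas every other edge of $H$ occurs only during the bounded prefix $[0,\tau_n]$ and is thus an eventual missing edge), so $g_n\in\mathcal{COT}|_\mathcal{F}$; hence by the specification the execution of $\mathcal{A}$ on $g_n$ stabilizes at some finite time $t_{n+1}$ onto an output whose true-set $M_n$ is a minimal dominating set of $H_n$. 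By the combinatorial fact, pick a connected spanning subgraph $H_{n+1}$ of $H$ of which $M_n$ is \emph{not} a minimal dominating set, pick $\tau_{n+1}\geq\max(t_{n+1},\tau_n+|V|)$, and let $g_{n+1}$ coincide with $g_n$ on $[0,\tau_{n+1}]$ (hence be static $H_n$ on $[\tau_n,\tau_{n+1}]$) and then be static equal to $H_{n+1}$ forever. Each $g_n$ is in $\mathcal{COT}$ since its infinite static tail on a connected graph supplies a journey between every pair of vertices after any date, and its underlying graph is $H$ since all edges of $H$ already appear in the inherited first prefix.

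Let $g_\omega\in\mathcal{G}$ be the limit of $(g_n)$: it is static equal to $H_n$ on each window $[\tau_n,\tau_{n+1}]$. Because every window has length at least $|V|>diam(H_n)$, a shortest-path journey can be routed within a single window (after possibly waiting for the next one), so $g_\omega\in\mathcal{COT}$; its underlying graph is $H$, hence $g_\omega\in\mathcal{COT}|_\mathcal{F}$. As the $g_n$ share with $g_\omega$ the prefix up to $\tau_{n+1}$ and $\tau_{n+1}\to\infty$, we have $g_n\to g_\omega$ in $(\mathcal{G},d_\mathcal{G})$, so Theorem~\ref{th:convergence} yields that the $(\mathcal{A},g_n)$-outputs converge to the $(\mathcal{A},g_\omega)$-output $o_\omega$; moreover, by determinism, $o_\omega$ agrees with the $(\mathcal{A},g_n)$-output on all of $[0,\tau_{n+1}]$. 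Since the execution of $\mathcal{A}$ on $g_n$ has stabilized by $t_{n+1}\leq\tau_{n+1}$, the set of processes outputting true in the execution of $\mathcal{A}$ on $g_\omega$ at time $\tau_{n+1}$ is exactly $M_n$. But $M_n$ is a minimal dominating set of $H_n$ while $M_{n+1}$ is one of $H_{n+1}$ and $M_n$, by construction, is not; hence $M_n\neq M_{n+1}$ for every $n$, so the output of $\mathcal{A}$ on $g_\omega$ changes at arbitrarily late times and no suffix of its execution can have every process outputting a constant boolean. This contradicts the minimal dominating set specification on $g_\omega\in\mathcal{COT}|_\mathcal{F}$ and proves Theorem~\ref{th:MDSimp}.

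As the paper anticipates, I expect the real difficulty to lie in the construction of $(g_n)$: one must keep every $g_n$ \emph{and} the limit $g_\omega$ inside $\mathcal{COT}$ (forcing the bounded-latency choice and the window constraint $\tau_{n+1}\geq\tau_n+|V|$), make the common prefixes grow so that Theorem~\ref{th:convergence} transfers the ``the output must change again'' phenomenon from each finite execution to one execution on $g_\omega$ where it recurs forever, and verify at each step that the freshly chosen $H_{n+1}$ both defeats the current true-set $M_n$ and remains a connected spanning subgraph of $H$. By contrast, the selection of $H_{n+1}$ is a one-line consequence of the definition of a strong minimal dominating set (one could also read it off Lemma~\ref{lem:characterization}), and the identity $U^\omega_{g_n}=H_n$ is routine bookkeeping.
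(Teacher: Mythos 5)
Your proof is correct and runs on the same engine as the paper's: assume an algorithm $\mathcal{A}$ exists, build a sequence of TVGs of $\mathcal{COT}|_\mathcal{F}$ with ever-growing common prefixes on each of which $\mathcal{A}$ is forced to stabilize to a \emph{different} dominating set, and invoke Theorem~\ref{th:convergence} to transfer the infinitely many changes of output to the limit TVG $g_\omega$, contradicting the specification. Where you genuinely diverge is in the construction of the perturbation. The paper keeps the original presence function of a fixed $g$ and, after each stabilization onto $M_i$, suppresses only the edge set $E_i=\{\{p_i,q\}\mid q\in M_i\cap\mathcal{N}_{p_i}\}$ for one undominated process $p_i$, and only during a finite window $]\eta_i,\alpha_i]$; it needs the cut-set characterization of Lemma~\ref{lem:characterization} to certify that this surgical removal leaves $U^\omega_{g'_i}$ connected and hence keeps the perturbed TVG in $\mathcal{COT}$. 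You instead make each window entirely static and equal to a connected spanning subgraph $H_{n+1}$ that defeats $M_n$, obtained directly from the definition of a strong minimal dominating set; connectivity of each window is then free, Lemma~\ref{lem:characterization} is not needed at all, and the only extra care is the window-length condition $\tau_{n+1}\geq\tau_n+|V|$ guaranteeing that the limit $g_\omega$ (which is never eventually static) stays in $\mathcal{COT}$ --- a condition you do supply. Both routes are sound; yours is more self-contained for this theorem, while the paper's version perturbs an arbitrary (not necessarily static) starting TVG. Two cosmetic points, both of which the paper also glosses over: read off $M_n$ at a time strictly inside $[t_{n+1},\tau_{n+1}[$ rather than at the boundary where the prefixes may first differ, and note that a graph with no strong minimal dominating set is necessarily connected (a disconnected graph has no connected spanning subgraph, so every subset is vacuously an SMDS), which is what lets $H$ support TVGs of $\mathcal{COT}|_\mathcal{F}$ in the first place.
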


\begin{proof}
Let us introduce some notation first. We define, for any TVG $g=(V,E,\mathcal{T},\rho,\zeta,\phi)$,  the TVG $g\odot\{(E_i,\mathcal{T}_i)|i\in I\}$ (with $I\subseteq \mathbb{N}$ and for any $i\in I$, $E_i\subseteq E$ and $\mathcal{T}_i\subseteq\mathcal{T}$) as the TVG $(V,E,\mathcal{T},\rho',\zeta,\phi)$ with:
\[\rho'(e,t)=\begin{cases}
0 \text{ if } \exists i\in I, e\in E_i \text{ and } t\in\mathcal{T}_i\\
1 \text{ if } \exists i\in I, e\in E\setminus E_i \text{ and } t\in\mathcal{T}_i\\
\rho(e,t) \text{ otherwise}
\end{cases}\]

By contradiction, assume that there exists a set of (static) graphs $\mathcal{F}$ containing at least one graph that does not admit a strong minimal dominating set and that there exists a deterministic algorithm $\mathcal{A}$ that satisfies the minimal dominating set specification for $\mathcal{COT}|_\mathcal{F}$. In consequence, any process that executes $\mathcal{A}$ outputs a boolean value at any time.

Let $g=(V,E,\mathcal{T},\rho,\zeta,\phi)$ be a TVG of $\mathcal{COT}|_\mathcal{F}$ such that $U_g$ does not admit a strong minimal dominating set and that all edges of $U_g$ are present during the first communication step of the execution of $\mathcal{A}$ on $g$ ($g$ exists by construction of $\mathcal{F}$ and by definition of $\mathcal{COT}|_\mathcal{F}$). Let $t_0$ be the time of completion of the first communication step of the execution of $\mathcal{A}$ on $g$. We construct then a sequence $(g_n)_{n\in\mathbb{N}}$ of TVGs as follows. We set $g_0=g$. Assume that we have already $g_i=(V,E,\mathcal{T},\rho',\zeta,\phi)$ for a given $i\in\mathbb{N}$ such that $g_i\in\mathcal{COT}|_\mathcal{F}$, $U_{g_i}=U_g$, and $\exists \alpha_i>t_0,\forall e\in E,\forall t\leq \alpha_i,\rho'(e,t)=\rho(e,t)$. Then, we define inductively $g_{i+1}$ as follows (refer to Figure \ref{fig:impossibility} for an illustration, gray boxes represent portions of executions where $\mathcal{A}$ outputs a stable minimal dominating set):

\begin{enumerate}
\item Consider the execution of $\mathcal{A}$ over $g_i$ and let $\eta_{i}\in\mathcal{T}$ be the smallest time strictly greater than $\alpha_i$ from which the set of processes that output true is constant ($\eta_i$ exists by assumption on $\mathcal{A}$ since $g_i\in\mathcal{COT}|_\mathcal{F}$);
\item Let $M_i$ be the minimal dominating set computed by $\mathcal{A}$ on $g_i$ (\emph{i.e.} the set of processes of $g_i$ outputting true after $\eta_i$). As $U_{g_i}=U_g$, we know by assumption on $U_g$ that $U_{g_i}$ does not admit a SMDS. In particular, $M_i$ is not a SMDS of $U_{g_i}$. Hence, there exists a process $p_i$ of $V\setminus M_i$ such that the set of edges  $E_i=\{\{p_i,q\}| q\in M_i\cap\mathcal{N}_{p_i}\}$ is not a cut-set of $U_{g_i}$; 
\item Let $g'_i=g_i\odot\{(E_i,\mathcal{T}\cap]\eta_i,+\infty[)\}$.
\item Remark that $U_{g'_i}=U_{g_i}=U_g$ (by construction of $g'_i$ since $\eta_i>t_0$) and that $U^\omega_{g'_i}$ is connected (since $E(U^\omega_{g'_i})=E(U_g)\setminus E_i$ by construction\footnote{where $E(g)$ denotes the set of edges of $g$.} and $E_i$ is not a cut-set of $U_{g}$). Hence, $g'_i\in\mathcal{COT}|_\mathcal{F}$ and we can consider the execution of $\mathcal{A}$ over $g'_i$. Let $\alpha_{i}\in\mathcal{T}$ be the smallest time strictly greater than $\eta_i$ from which the set of processes that output true is constant. Let $M'_i$ be the minimal dominating set computed by $\mathcal{A}$ on $g'_i$ (\emph{i.e.} the set of processes of $g'_i$ outputting true after $\alpha_i$). Note that $M'_i\neq M_i$ since $M_i$ is not a minimal dominating set of $U^\omega_{g'_i}$ (recall that, in $U^\omega_{g'_i}$, $p_i$ has no neighbor in $M_i$);
\item Let $g_{i+1}=g_i\odot\{(E_i,\mathcal{T}\cap]\eta_i,\alpha_i])\}$.
\end{enumerate}

It is straightforward to check that this construction ensures that, if there exists $g_i=(V,E,\mathcal{T},\rho',\zeta,\phi)$ for a given $i\in\mathbb{N}$ such that $g_i\in\mathcal{COT}|_\mathcal{F}$, $U_{g_i}=U_g$, and $\exists \alpha_i>t_0,\forall e\in E,\forall t\leq \alpha_i,\rho'(e,t)=\rho(e,t)$, then $g_{i+1}$ satisfies the same property. Moreover, as $g_0=g$, this property is naturally satisfied for $i=0$ with any $\alpha_0>t_0$. Hence, the sequence $(g_n)_{n\in\mathbb{N}}$ is well-defined. Note that, for any $i\in\mathbb{N}$, $\eta_i<\alpha_i$ and $\alpha_i<\eta_{i+1}$ (by construction). 

That allows us to define the following TVG: $g_\omega=g\odot\{(E_i,\mathcal{T}\cap]\eta_i,\alpha_i])|i\in\mathbb{N}\}$. Note that $U_{g_\omega}=U_g$ and then that $g_\omega$ belongs to $\mathcal{COT}|_\mathcal{F}$. Observe that, for any $k\in\mathbb{N}^*$, we have $d_\mathcal{G}(g_k,g_\omega)=2^{-\eta_k}$ by construction of $(g_n)_{n\in\mathbb{N}}$ and $g_\omega$. Thus, $(g_n)_{n\in\mathbb{N}}$ converges in $\mathcal{COT}|_\mathcal{F}$ to $g_\omega$.

We are now in measure to apply the Theorem \ref{th:convergence} that states that the $(\mathcal{A},g_\omega)$-output is the limit of the sequence of the $(\mathcal{A},g_n)$-outputs. In other words, the $(\mathcal{A},g_\omega)$-output shares a prefix of length $\eta_i$ with the $(\mathcal{A},g_i)$-output for any $i\in\mathbb{N}$ (recall that the sequence of the $(\mathcal{A},g_n)$-outputs is Cauchy since it converges). That means that, for any $i\in\mathbb{N}^*$, the set of processes that output true in $g_\omega$ at $\eta_i$ is $M_i$ and the set of processes that output true in $g_\omega$ at $\alpha_i$ is $M'_i$. As we know that $M_i\neq M'_i$ for any $i\in\mathbb{N}$, we obtain that the set of processes that output true in $g_\omega$ never converges, that contradicts the fact that $\mathcal{A}$ satisfies the minimal dominating set specification for $\mathcal{COT}|_\mathcal{F}$ and ends the proof.
\end{proof}

\subsection{Algorithm}\label{sub:MDSalgo}

We are now able to prove the sufficiency of the existence of a strong minimal dominating set on the underlying graph for the construction of a minimal dominating set over time of any TVG of $\mathcal{COT}$. We prove this result simply by presenting an algorithm based on our underlying graph computation algorithm presented in Section \ref{sec:UG}. 

This algorithm works as follows. Once a process has computed the underlying graph, it is easy to decide if this process belongs to the outputted minimal dominating set: the process enumerates (locally and in a deterministic order based \emph{e.g.} on process identities) all minimal dominating sets of the underlying graph and chooses the first one that satisfies Lemma \ref{lem:characterization}. This latter is then a strong minimal dominating set of the underlying graph and hence a minimal dominating set over time of the TVG by Lemma \ref{lem:smdsismdst}. In order to avoid the use of an algorithm of termination detection (for the underlying graph computation), each process repeats the local computation of its output at each update of its local copy of the underlying graph by the algorithm of Section \ref{sec:UG}. The existence of this simple algorithm allows us to state the following result:

\begin{theorem}
For any set of (static) graphs $\mathcal{F}$ containing only graphs that admit a strong minimal dominating set, there exists a deterministic algorithm that satisfies the minimal dominating set specification for $\mathcal{COT}|_\mathcal{F}$.
\end{theorem}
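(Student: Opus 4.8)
The plan is to prove sufficiency constructively by exhibiting the algorithm sketched in the paragraph preceding the theorem and verifying that it meets the minimal dominating set specification on $\mathcal{COT}|_\mathcal{F}$ whenever every graph in $\mathcal{F}$ admits a strong minimal dominating set. The algorithm runs Algorithm~\ref{algo:ug} as a black box; after every update of its local copy $g_p$ of the underlying graph, process $p$ recomputes its output as follows. It enumerates all minimal dominating sets of $g_p$ in a fixed deterministic order depending only on process identifiers (for concreteness, in lexicographic order of the sorted identifier-lists of their members), tests each one against the combinatorial criterion of Lemma~\ref{lem:characterization} (namely, that $\{\{p',q\}\mid q\in M\cap\mathcal{N}_{p'}\}$ is a cut-set of $g_p$ for every $p'\notin M$), and picks the first set $M$ that passes. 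Process $p$ then outputs $\mathit{true}$ iff $p\in M$. Since all computations are local and deterministic on the same graph, there is nothing more to describe.

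First I would invoke the correctness of Algorithm~\ref{algo:ug}: by the first theorem of Section~\ref{sec:UG}, on any $g\in\mathcal{COT}|_\mathcal{F}$ every process eventually holds, and from then on constantly holds, $g_p=U_g$. Hence the execution has a suffix in which every process's local graph equals $U_g$, so every process runs the same local enumeration-and-test procedure on the identical input $U_g$ and thereafter outputs a constant boolean. It remains only to check that the common set $M$ selected is a minimal dominating set over time of $g$. Because $U_g\in\mathcal{F}$, it admits at least one strong minimal dominating set; by Lemma~\ref{lem:characterization} that set satisfies the cut-set criterion, so the enumeration is guaranteed to find at least one passing set, and the first one found, call it $M$, is — again by Lemma~\ref{lem:characterization} — a strong minimal dominating set of $U_g$. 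Then Lemma~\ref{lem:smdsismdst} gives that $M$ is a minimal dominating set over time of $g$, i.e.\ a minimal dominating set of $U^\omega_g$. Since the set of processes outputting $\mathit{true}$ in the suffix is exactly $M$, the specification is satisfied.

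Two small points deserve a line of care in the write-up, and they are where I expect the only (mild) friction. One is to note that the enumeration-and-test step is well defined and halts: $U_g$ is finite, so there are finitely many subsets to enumerate, and both "is a minimal dominating set" and "satisfies the Lemma~\ref{lem:characterization} cut-set condition" are decidable predicates on a finite graph; moreover a passing candidate exists by the hypothesis on $\mathcal{F}$, so the "first one" is well defined. The other is the need to repeat the local computation at every update of $g_p$ rather than waiting for termination: since no process can detect that $g_p$ has stabilized to $U_g$ (there is no termination detection for underlying-graph computation in $\mathcal{COT}$), recomputing on each update is exactly what guarantees that, once $g_p$ reaches its final value $U_g$, the output also reaches its final, correct value — and never changes afterward because the input never changes afterward. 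With these observations in place the proof is complete; there is no substantial obstacle, the result being essentially a packaging of the earlier theorem together with Lemmas~\ref{lem:characterization} and~\ref{lem:smdsismdst}.
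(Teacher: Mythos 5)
Your proposal is correct and follows essentially the same route as the paper: run the underlying-graph algorithm of Section~\ref{sec:UG}, have each process deterministically enumerate minimal dominating sets of its local copy and select the first one passing the cut-set criterion of Lemma~\ref{lem:characterization}, recompute at every update to avoid termination detection, and conclude via Lemma~\ref{lem:smdsismdst}. The paper gives only the informal algorithm description as justification, so your write-up is in fact somewhat more explicit (notably on agreement across processes and on the well-definedness of the enumeration), but the argument is the same.
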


\section{Conclusion}\label{sec:conclu}

This paper addressed the construction of a minimal dominating set over time (MDST) in highly dynamic distributed systems. We considered the weakest connectivity assumption in the hierarchy of time-varying graphs: the graph may be disconnected at each time, topological changes are unpredictable but we know that any process is able to communicate with any other infinitely often using so-called temporal paths. In this context, we proposed a new definition of minimal dominating set increasing the stability of the computed MDST. Next, we provided a necessary and sufficient topological condition for the existence of a deterministic MDST algorithm. We then proposed a new measure of time complexity that takes in account the communication delays due to network dynamic.  

The above results used the construction of an underlying graph. We showed the time optimality of our algorithm with respect to our measure. Note that our result (Theorem~\ref{th:optimality}) is valid for greedy algorithms only. We conjecture that all distributed underlying graph algorithms are greedy. This would lead to generalize our result of optimality. Also, we would like to extend our approach to other related overlay constructions. 

\bibliographystyle{plain}
\bibliography{biblioRR}

\end{document}